\documentclass[acmsmall,screen,authorversion,nonacm]{acmart}

\usepackage[utf8]{inputenc} 
\usepackage[T1]{fontenc}    
\usepackage{hyperref}       
\usepackage{url}            
\usepackage{booktabs}       
\usepackage{amsfonts}       
\usepackage{nicefrac}       
\usepackage{microtype}      
\usepackage{xcolor}         
\usepackage{algorithm}
\usepackage{algorithmic}
\usepackage{float}
\usepackage{hhline}
\usepackage{amsmath,bm}
\usepackage{amsthm}
\usepackage{graphicx}
\usepackage[shortlabels]{enumitem}
\usepackage{tabularx}
\usepackage[english]{babel}

\newtheorem{definition}{Definition}

\usepackage{pifont}
%
%



\newcommand{\D}{\mathcal{D}}

\newcommand{\Eb}{\mathbb{E}}

\newcommand{\g}{\mathbf{g}}

\newcommand{\I}{\mathbf{I}}

\newcommand{\Wt}{\widetilde{\mathbf{W}}}

\newcommand{\Nc}{\mathcal{N}}

\renewcommand{\v}{\mathbf{v}}

\newcommand{\W}{\mathbf{W}}

\newcommand{\x}{\mathbf{x}}

\newcommand{\y}{\mathbf{y}}

\newcommand{\xb}{\mathbf{\bar{x}}}

\newcommand{\1}{\mathbf{1}}
\newcommand{\ot}{\1\otimes}

\newtheorem{thm}{Theorem}

\newtheorem{lem}{Lemma}

\newtheorem{assum}{Assumption}



\allowdisplaybreaks[4]
\AtBeginDocument{%
  \providecommand\BibTeX{{%
    \normalfont B\kern-0.5em{\scshape i\kern-0.25em b}\kern-0.8em\TeX}}}

\begin{document}

\title{FD-GATDR: A Federated-Decentralized-Learning Graph Attention Network for Doctor Recommendation Using EHR}




\author{Luning Bi}
\email{luningbi@iastate.edu}
\affiliation{%
  \institution{Iowa State University}
  \city{Ames}
  \country{USA}}

\author{Yunlong Wang}
\email{yunlong.wang@iqvia.com}
\affiliation{%
  \institution{IQVIA}
  \city{Plymouth Meeting, PA}
  \country{USA}
}

\author{Fan Zhang}
\email{fan.zhang@iqvia.com}
\affiliation{%
  \institution{IQVIA}
  \city{Plymouth Meeting, PA}
  \country{USA}
}

\author{Zhuqing Liu}
\email{liu.9384@osu.edu}
\affiliation{%
  \institution{The Ohio State University}
  \city{Columbus, OH}
  \country{USA}
}

\author{Yong Cai}
\email{yong.cai@iqvia.com}
\affiliation{%
  \institution{IQVIA}
  \city{Plymouth Meeting, PA}
  \country{USA}
}

\author{Emily Zhao}
\email{emily.zhao@iqvia.com}
\affiliation{%
  \institution{IQVIA}
  \city{Plymouth Meeting, PA}
  \country{USA}
}

\renewcommand{\shortauthors}{Bi, et al.}

\begin{abstract}
In the past decade, with the development of big data technology, an increasing amount of patient information has been stored as electronic health records (EHRs). Leveraging these data, various doctor recommendation systems have been proposed. Typically, such studies process the EHR data in a flat-structured manner, where each encounter was treated as an unordered set of features. Nevertheless, the heterogeneous structured information such as service sequence stored in claims shall not be ignored. This paper presents a doctor recommendation system with time embedding to reconstruct the potential connections between patients and doctors using heterogeneous graph attention network. Besides, to address the privacy issue of patient data sharing crossing hospitals, a federated decentralized learning method based on a minimization optimization model is also proposed. The graph-based recommendation system has been validated on a EHR dataset. Compared to baseline models, the proposed method improves the AUC by up to 6.2\%. And our proposed federated-based algorithm not only yields the fictitious fusion center's performance but also enjoys a convergence rate of O(1/T).
\end{abstract}

\begin{CCSXML}
<ccs2012>
   <concept>
       <concept_id>10002951.10003227.10003351.10003269</concept_id>
       <concept_desc>Information systems~Collaborative filtering</concept_desc>
       <concept_significance>500</concept_significance>
       </concept>
 </ccs2012>
\end{CCSXML}

\ccsdesc[500]{Information systems~Collaborative filtering}
\keywords{graph learning, doctor decentralized recommendation, federated learning, heterogeneous data}


\maketitle

\section{Introduction} \label{sec: intro}

Electronic health record (EHR) system has been growing rapidly in the past decade. EHR contains patients’ medical information, history, diagnoses, medications, treatment plans,and laboratory and test results. It provides automate and streamline workflow to facilitate the decision-making of providers. Benefiting from the progress in the machine learning area, a variety of deep learning techniques and frameworks have been applied to clinical applications including such as information extraction, representation learning, outcome prediction, phenotyping, and a de-identification \cite{shickel2017deep}. This study aims to build a doctor recommendation system using patients' EHR  history.

There are different types of EHR code representation applied in the existing studies. Word2vec methods have gained more popularity recently \cite{bai2017joint} \cite{che2017exploiting}. In the word-level vector representation methods, all possible meanings of a word are transformed as a single vector representation, ignoring the fact that the word represent differently in different context. To solve this problem, Alsentzer et al. proposed the clinic BERT, a context-based model that was trained on all notes from MIMIC III, which have achieved the state-of-art performance \cite{alsentzer2019publicly}. By using these embedding methods, patients can be represented by the sequence of clinic service codes. Techniques are ranged from autoencoders for diabetic nephropathy \cite{katsuki2018risk}, CNNs for unplanned readmission prediction \cite{pham2016deepcare}, LSTMs for heart failure prediction \cite{maragatham2019lstm}, and GRU networks for predicting mortality \cite{sankaranarayanan2021covid}\cite{shickel2017deep}. However, in these studies, EHR data were treated as flat-structured information. The relationship among services, patients and doctors were ignored. And in terms of prediction tasks, there is lack of studies explored the application of EHR data mining to doctor recommendation. Compared to treatment recommendation, doctor recommendation needs more structured information due to some factors such as speciality and the patient's preference. Therefore, a method that can learn underlying information from the hidden structure in EHR data is in in immediately needs for the doctor recommendation.

Standard machine learning approaches require centralizing the training data on one machine or in a data center. However, collecting clinical datasets from isolated medical centers is unpractical since Clinic data such as disease symptoms and medical recordings are highly sensitive \cite{lu2019learn}. Under this context, federated Learning (FL), which enables local client model to collaboratively learn a shared prediction model while keeping all the training data private, eliminates the need to store the data in the cloud. However, most of the existing works in FL are limited to systems with i.i.d. datasets and centralized parameter servers. In the real world, the EHR data is fully decentralized among hospitals. The data is non i.i.d. due to the geographic influence on human health. Nevertheless, each hospital can share de-identified, non-sensitive, and intermediate statistics with its neighborhood hospitals. In this study, a federated decentralized learning is proposed to improve the client models' performance while satisfying the privacy requirements. 

Our contributions can be summarized as follows:
\begin{itemize}
    \item To extract the structured information from the EHR data, a heterogeneous graph consisting different types of nodes and edges is built.
    \item We propose a heterogeneous graph attention network (HGAT) considering time sensitivity and node heterogeneity for the representation of the patient, doctor and service .
    \item A federated decentralized learning algorithm is proposed to address the issue of data sharing among hospitals.
    \item The case study shows that the proposed graph model can achieve better performance than other baseline models. The federated decentralized learning algorithm can realize comparable performance compared to the global training.
\end{itemize}

\section{Doctor Recommendation Using EHR}
In EHR, each interaction between patients and doctors is trying to answer two questions: "what is happening" and "what happens next" \cite{pham2017predicting}. The first question is about diagnosis of patients' health status. It also provides the answer to the second question by indicating the future disease risk and corresponding treatments. Traditionally, this step is completed by experienced doctors, which means expensive communication cost and time cost. Under this setting, EHR, which contains detailed information of patient medical history, has become an alternative. Therefore, this paper focuses on developing an end-to-end prototype for recommending a doctor based on EHR of the patients. However, there are two major challenges, i.e., data heterogeneity and data privacy.

\subsection{Data Heterogeneity}\label{sec: hg}

Although in some way EHR is similar to human natural language, EHR contains a variety of information such as billing codes, medical service codes, laboratory measurements, patient demographic information, doctor's speciality, etc. Utilizing part of the EHR as flat-structured data for the representation learning can lose a lot of valuable information. This study is aimed to extract heterogeneous structured from the EHR data. The concept is introduced as follows.

\begin{definition}
  \textbf{Heterogeneous Graph}. A heterogeneous graph can be represented by $G=(V,E,A,R,\phi,\varphi)$, where $V$ denotes the set of nodes and $E$ represents the set of edges, $A$ and $R$ represent the node and edge types respectively, and $\phi$ and $\varphi$ denote the the mapping functions $\phi:V \overrightarrow{} A$ and $\varphi:E \overrightarrow{} R$. Moreover, in HIN, $|A|+|R|\geq 3$.
\end{definition}

For better understanding, Fig. \ref{fig:graph_example} is illustrated as an example of EHR graph. There are three types of nodes: patient, doctor and service. The patient node have the attributes such as age, sexuality and location. The doctor node have the attributes such as speciality and location. The service nodes can be divided into three sub-level node types: diagnosis, procedure and product. Each type of service node are stored as text in EHR, for example, "px-RADIOGRAPHY: SHOULDER AND UPPER ARM". The text or service codes can be used for diagnosis, risk prediction and decision-making of the following services/treatments. The edges of the graph represents the interactions between nodes. 

\begin{figure}[h]%
\centering
\includegraphics[width=0.42\textwidth]{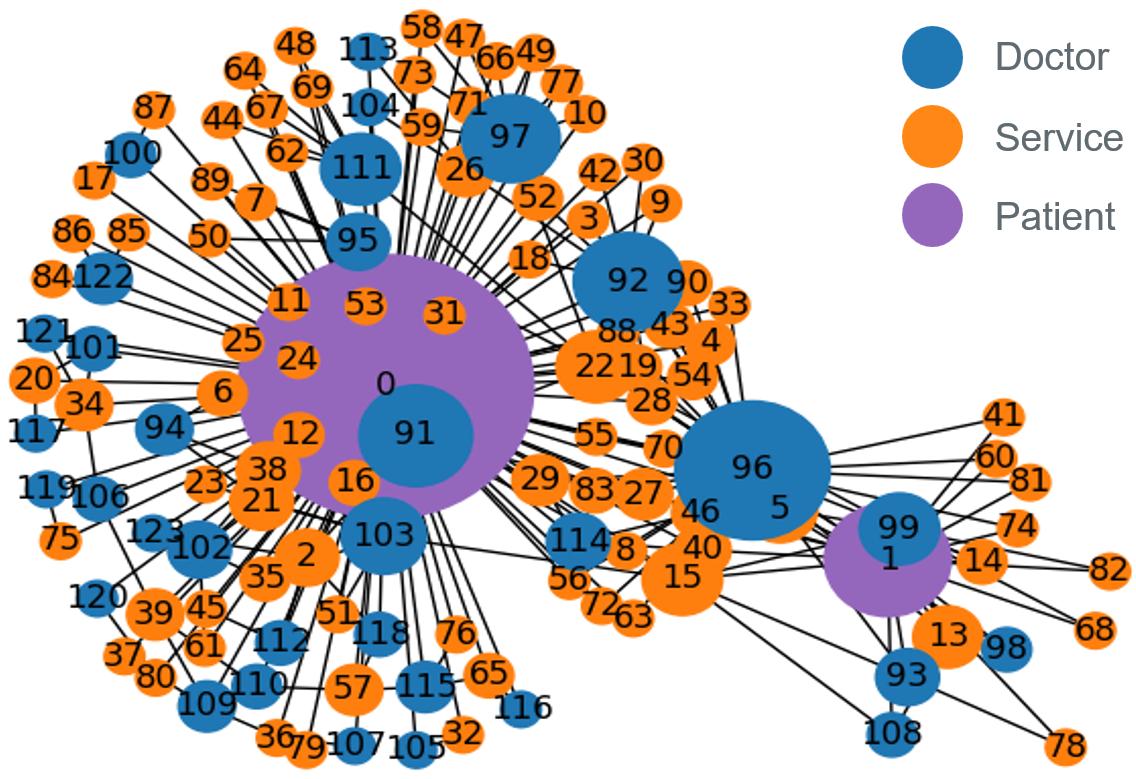}
\caption{An example of EHR graph. The service nodes can be divided into diagnosis, procedure and product.}\label{fig:graph_example}
\end{figure}

Four edge types are considered in this paper: patient-service, doctor-service, service-service and patient-doctor. 
\begin{itemize}

\item Patient-services means that the patient receives the service. 

\item Doctor-service characterizes the speciality of the doctor. 

\item Service-service represents the chronological order of services. For example, some products are usually prescribed together for a specific disease. And a diagnosis service will be followed the corresponding products and procedure, or even another diagnosis requirement. These structured information provides the basic logic of the doctor recommendation because it not only connects the patient and doctor but also connect the present with the future by predicting the next services. 

\item Patient-doctor is the target of doctor recommendation. The prediction of patient-doctor edge is based on the learned pattern from the aforementioned node attributes and the other three types of edge. In other words, doctor recommendation is based on the current status and future prediction. Based on the heterogeneous graph, the doctor recommendation considered system can be defined as follows.
\end{itemize}

\begin{definition}
  \textbf{Doctor Recommendation (DR)}. Given a EHR set of $<\mathcal{V_{P},V_{D},V_{S},W,B}>$ where $\mathcal{V_{D}}$ denotes the set of doctors, $\mathcal{V_{S}}$ denotes the set of services, $\mathcal{V_{P}}$ denotes the set of patients, $\mathcal{W}$ represents the attributes of the objects, and $\mathcal{B}$ represents the interactions between two types of objects. The DR system is aimed to predict the next couple of services $<\mathcal{V_{S}}>$ needed in future as well as the suitable doctors $<\mathcal{V_{P}}>$. For the implementation, a simple scheme is to mask out a most recent part, e.g., 20\%, of the EHR data and use the doctors appeared in the masked data as the ground-truth label.
\end{definition}

\subsection{Data Privacy} \label{sec: privacy}
The performance of GL models are largely determined by the amount of the training data. However, there are several challenges to this exercise. First, data privacy and security are sensitive in the medical area. It is difficult to integrate the data collected and aggregated across hospitals. Second, the EHR data collected across hospitals are often non i.i.d. due to the environmental factors. That means using only one hospital data may influence the generalization ability of the model. This paper will address these two concerns using a novel training method.

\section{Federated-Decentralized-Learning Graph Attention Network}
To solve the problems aforementioned, FD-GATDR, a federated-decentralized-learning graph attention network for doctor recommendation is proposed. 

\subsection{Service Embedding}
The first step is to convert the text of services into the representation in the computer language. Current research borrows the concepts from NLP, in which service codes are viewed as “words” and patients or encounters as the “sentences”. Sequences of patient encounters can also be seen as the “documents”. 
Deriving vector-based representation of clinical concepts is a common way to reduce the dimension of the code space and reveal complex relationships between different types of nodes, such as word2vec \cite{guthrie2006closer}, GLoVe \cite{pennington2014glove} and BERT \cite{devlin2018bert}. In the case study, all the three methods have been tested. It should be noted that the service embedding is just the first step to represent service. The service representation will be further updated by the graph learning model which is introduced in the following section. 

\subsection{Heterogeneous Graph Attention Network (HGAT)}
The basic idea of the proposed model HGAT is to learn meaningful and robust representations for different types of nodes. The input to the model is a set of node features, $\mathbf{h}=\{\vec{h}_{1},\vec{h}_{2},...,\vec{h}_{N}\}$, $\vec{h}_{i} \in \mathbb{R}^{F_{t_i}}
$, where $N$ denotes the number of nodes, $t_i$ represents the node type of the $\vec{h}_{i}$ and $F_{t_i}$ is the number of features for the node type of the $\vec{h}_{i}$.

\textbf{Neighborhood nodes sampling}. Different from other recommendation system, doctor recommendation is more time-sensitive, meaning that it should recommend the most needed services or doctors. It is assumed that the closer interactions have more prevalence with recommendation. Thus, instead of using first-order neighbor nodes or selecting nodes randomly, the neighbor nodes is selected based on the weight of edges using the roulette method. Specifically, the weights can be calculated based on the timestamp of the interaction using a custom-built linear or log method.

\textbf{Heterogeneous node representation}. To address the data heterogeneity issue, Eq. \ref{Eq: hattention} is proposed to obtain the attention coefficients. 

\begin{equation} \label{Eq: hattention}
\alpha_{i j}=\frac{\exp \left(\operatorname{LeakyReLU}\left(\overrightarrow{\mathbf{a}}^{T}\left[\mathbf{Q} \vec{h}_{i} \| \mathbf{Q} \vec{h}_{j}\| \mathbf{V}_{t_{i}t_{j}}\right]\right)\right)}{\sum_{k \in \mathcal{N}_{i}} \exp \left(\operatorname{LeakyReLU}\left(\overrightarrow{\mathbf{a}}^{T}\left[\mathbf{Q} \vec{h}_{i} \| \mathbf{Q} \vec{h}_{k}\| \mathbf{V}_{t_{i}t_{j}}\right]\right)\right)}
\end{equation}
Where $\mathbf{Q}$ denotes input linear transformation’s weight matrix and $\mathbf{V}_{t_{i}t_{j}}$ is a shared one-dimension vector that represents the attention between two nodes in the node type level. Compared the original version in \cite{velickovic2017graph}, the attention between node types are specified to improve the model performance on heterogeneous graph.

Specifically, we repeat the attention for $K$ times and concatenate the learned embeddings as the embedding:
\begin{equation}
\vec{h}_{i}^{\prime}=\sigma\left(\frac{1}{K} \sum_{k=1}^{K} \sum_{j \in \mathcal{N}_{i}} \alpha_{i j}^{k} \mathbf{Q}^{k} \vec{h}_{j}\right)
\end{equation}

Where $k$ represent concatenation, $\alpha_{i j}^{k}$ are normalized attention coefficients computed by the $k$-th attention mechanism, and $\mathbf{Q}^{k}$ is the corresponding input linear transformation’s weight matrix.

\subsection{Reconstruction Loss}

In this paper, it is assumed that there is an edge between every two successive services in the EHR of a patient. The robustness of an service-service edge is determined by the frequency of the edge in different patients' EHR. Moreover, instead of being an attribute of the doctor node, the doctor's speciality is masked out and used as a prediction task by using the doctor embedding obtained by HGAT. 

Therefore, three types of reconstruction loss are considered: doctor-specialty, service-service and patient-doctor as in Eq. \ref{Eq: overall}. The weights of each loss term can be manually set or using a Bayesian task weight learner introduced in \cite{zheng2021multi}.

\begin{equation}\label{Eq: overall}
\mathcal{L}=\mathcal{L}_{ds}+\mathcal{L}_{ss}+\mathcal{L}_{pd}
\end{equation}
Where $\mathcal{L}_{ds}$ denotes the reconstruction loss of doctor-speciality, $\mathcal{L}_{ss}$ denotes the reconstruction loss of service-service and  $\mathcal{L}_{pd}$ denotes the reconstruction loss of patient doctor. 

$\mathcal{L}_{ds}$ and  $\mathcal{L}_{ss}$ can be calculated the cross entropy loss function since they are multi-class prediction tasks.  For the patient-doctor reconstruction, a pairwise personalized ranking loss, namely, Bayesian personalized ranking (BPR) \cite{rendle2012bpr}, is used. The BPR loss is maximizing the match degree between the patient and the doctor recommended (positive sample) while minimizing the match degree between the patient and the doctor not recommended (negative sample). The model performance is evaluated using recall and AUC.

\subsection{Federated Decentralized Learning} \label{sec: alg}

To address the data privacy issue discussed in \ref{sec: privacy}, a federated decentralized learning method is proposed. In the federated decentralized scenario, the hospitals form a peer-to-peer network system, which can be represented by an undirected connected graph $\mathcal{G} = (\mathcal{N}, \mathcal{L})$.
Here, $\mathcal{N}$ and $\mathcal{L}$ are the sets of workers for local hospitals and the edges between hospitals, respectively, with $|\mathcal{N}| = m$. 
The workers are capable of local computation and communicating with their neighboring workers via the edges in $\mathcal{L}$.
The goal of fully decentralized FL is to have the workers {\em distributively} and {\em collaboratively} solving the global optimization problem in the following form:

\begin{align}\label{Eq: problem}
 \min_{\x \in \mathbb{R}^p} \frac{1}{m}\sum_{i=1}^{m} f(\x_i)= \min_{\x \in \mathbb{R}^p} \frac{1}{mn}\sum_{i=1}^{m}\sum_{j=1}^{n} f(\x_i;\zeta_{j}),
\end{align}

where each local objective function $f_i(\x) \triangleq \frac{1}{n}\sum_{j=1}^{n} f_i(\x;\zeta_{j})$ is only observable to worker $i$ and not necessarily convex.
Here, $\D_i$ represents the distribution of the dataset at node $i$, which is {\em heterogeneous} across workers.

First, we reformulate Problem~(\ref{Eq: problem}) in the following equivalent form by introducing a local model copy at each worker:
\begin{align}\label{Eq: consensus_problem}
	& \text{Minimize} && \hspace{-.5in} \frac{1}{m}\sum_{i=1}^{m} f(\x_i) & \\
	& \text{subject to} && \hspace{-.5in} \x_i = \x_{i'}, && \hspace{-.5in} \forall (i,i') \in \mathcal{L}. \nonumber
	\vspace{-.05in}
\end{align}

where $\x \triangleq [\x_1^\top,\cdots,\x_m^\top]^\top,$ and $\x_i$ is an introduced local copy at worker $i$.
To solve Problem \eqref{Eq: consensus_problem}, we consider an $\epsilon^2$-stationary point $\x$ defined as follows:
\begin{align}\label{Eq: FOSP_network}
	\underbrace{\Big\|\frac{1}{m}\sum_{i=1}^{m} \nabla f(\xb) \Big\|^2}_{\mathrm{Global \,\, gradient \,\, magnitude}} \!\!\!\! + \underbrace{\frac{1}{m}\sum_{i=1}^{m}\|\x_{i}- \xb\|^2}_{\mathrm{Consensus \,\, error}} \le \epsilon^2,
\end{align}

where $\xb \triangleq \frac{1}{m}\sum_{i=1}^{m} \x_{i}$ represents the global average across all workers.
The first term is the gradient norm of the global loss function and the second term is the average consensus error across all local copies.
In this work, we aim to develop an efficient algorithm to attain an $\epsilon^2$-stationary point for fully decentralized FL.

To solve Problem~\eqref{Eq: problem} in decentralized network systems where workers reach a {\em consensus} on a global optimal solution, a common approach in the literature is to let workers aggregate neighboring information through a consensus matrix $\W \in \mathbb{R}^{m\times m}$.
Let $[\W]_{ij}$ represent the element in the $i$-th row and the $j$-th column in $\W$.
Then, a consensus matrix $\W$ should satisfy the following properties:
\begin{enumerate}[topsep=1pt, itemsep=-.1ex, leftmargin=.35in]
	\item[(a)] {\em Doubly Stochastic:} $\sum_{i=1}^{m} [\mathbf{W}]_{ij}=\sum_{j=1}^{m} [\mathbf{W}]_{ij}=1$.
	\item[(b)] {\em Symmetric:} $[\mathbf{W}]_{ij} = [\W]_{ji}$, $\forall i,j \in \mathcal{N}$. 
	\item[(c)] {\em Network-Defined Sparsity Pattern:} $[\W]_{ij} > 0$ if $(i,j)\in \mathcal{L};$ otherwise $[\mathbf{W}]_{ij}=0$, $\forall i,j \in \mathcal{N}$.
\end{enumerate}
The above properties imply that the eigenvalues of $\W$ are real and can be sorted as $-1 < \lambda_m(\W) \leq \cdots \leq \lambda_2(\W) < \lambda_1(\W) = 1$.
We define the second-largest eigenvalue in magnitude of $\W$ as $\lambda \triangleq \max\{|\lambda_2(\W)|,|\lambda_m(\W)|\}$ for further notation convenience.
It can be seen later that $\lambda$ plays an important role in the step-size selection and characterizing the algorithm's convergence rate.

We start with stating the following assumptions:
\begin{assum}\label{Assumption: function}
The objectives $f(\cdot)$ and $f_i(\cdot)$ satisfy:
\begin{enumerate}[topsep=1pt, itemsep=-.1ex, leftmargin=.2in]
	\item[(1)] $f(\x)$ is bounded from below, i.e., there exists an $\x^*\in\mathbb{R}^p,$ such that $f(\x)\ge f(\x^*)$, $\forall \x\in\mathbb{R}^p;$
	\item[(2)] The function $f_i(\x)$ is continuously differentiable and has $L$-Lipschitz continuous gradients, i.e., there exists a constant $L >0$ such that $|\nabla f_i(\x_1) -\nabla f_i(\x_2)|\le L \| \x_1-\x_2 \|_2,$ $\forall \x_1,\x_2\in\mathbb{R}^p;$
\end{enumerate}
\end{assum}

\begin{algorithm}[t!]
	\caption{Federated Decentralized Learning}\label{Algorithm}
	\begin{algorithmic}[1]
		\STATE Set $\x_{i,0} = \x^0$ and $\y_{i,0} = \g_{i,0} =  \nabla f_i(\x_{i,0})$ at worker $i$, for all $i\in[m]$.
		\FOR{$k = 0, \cdots, K-1$}
		\FOR{worker $i$, $i \in [m]$}
		\STATE Share $(\x_{i,k}, \y_{i,k})$ with neighboring nodes;
		\STATE Consensus Update $\x_{i,k}  = {{\sum_{i'\in \Nc_i} [\W]_{ii'} \x_{j,k}}} - \gamma \y_{i,k}$;
		\IF {mod(k,q)=0}
		\STATE Calculate $\g_{i,k+1} =  \nabla f_i(\x_{i,k})$;
		\ELSE
		\STATE Calculate 
		$\g_{i,k+1} = \g_{i,k} + \frac{1}{|\mathcal{S}_{i,k}|}\sum_{j \in\mathcal{S}_{i,k}} (\nabla f_{i}(\x_{i,k+1}; \zeta^{(j)})- \nabla f_{i}(\x_{i,k}; \zeta^{(j)}))$;
		\ENDIF
		\STATE Gradient Tracking $\y_{i,k+1} = {{\sum_{j\in \Nc_{i'}} [\W]_{ii'} \y_{j,k}}}  +\g_{i,k+1} - \g_{i,k}$;
		\ENDFOR
		\ENDFOR
	\end{algorithmic}
\end{algorithm}

\begin{thm}[Convergence of Defra]\label{Thm}
	Under Assumption~\ref{Assumption: function}, with a constant-level step-size $\gamma$, which satisfies 
\\$\gamma \leq \min\{ \frac{1}{3L}, \sqrt{\frac{1-\lambda}{72m L^2}},\sqrt{\frac{1}{24m L^2}},  \frac{1}{5},  \frac{1}{40L^2},\frac{1-\lambda}{120L^2}, \frac{(1-\lambda)^2}{3}, \frac{1-\lambda}{6 L}, \sqrt{ \frac{1-\lambda}{12 L^2}}  \} $.  in Algorithm.\ref{Algorithm}( See detailed requirements in our Appendix), 
	then Algorithm~\ref{Algorithm} has the following convergence result:
	\begin{align*}
		\frac{1}{T} \sum_{t= 0}^{T-\!1}\Eb\big[{\Big\|\nabla f(\xb_k) \Big\|^2} + {\frac{1}{m}\sum_{i=1}^{m}\|\x_{i,k}- \xb_k\|^2} \big] =\mathcal{O}(\frac{1}{T})
	\end{align*}
\end{thm}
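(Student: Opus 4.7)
The plan is to construct a Lyapunov function combining the global objective, the consensus error, the gradient-tracking error, and the variance of the variance-reduced estimator $\g_{i,k}$, and to prove a per-iteration descent inequality that telescopes to the stated $\mathcal{O}(1/T)$ bound. The analysis follows the template of decentralized first-order methods that couple gradient tracking with a SARAH/SPIDER-type variance-reduction loop of period $q$.

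First I would exploit the doubly stochastic property of $\W$. Averaging line~5 of Algorithm~\ref{Algorithm} yields $\xb_{k+1} = \xb_k - \gamma \bar{\y}_k$, and a standard induction on line~11 using the initialization $\y_{i,0} = \g_{i,0}$ gives the gradient-tracking identity $\bar{\y}_k = \bar{\g}_k \triangleq \tfrac{1}{m}\sum_i \g_{i,k}$. Combining with $L$-smoothness (Assumption~\ref{Assumption: function}(2)) yields a descent inequality of the form
\begin{align*}
f(\xb_{k+1}) \le f(\xb_k) - \tfrac{\gamma}{2}\|\nabla f(\xb_k)\|^2 - \tfrac{\gamma}{2}(1-\gamma L)\|\bar{\y}_k\|^2 + \tfrac{\gamma}{2}\|\nabla f(\xb_k) - \bar{\g}_k\|^2,
\end{align*}
where the residual term splits into a bias piece (handled by $L$-smoothness and the consensus error $\sum_i\|\x_{i,k}-\xb_k\|^2$) and an estimator-variance piece $\|\tfrac{1}{m}\sum_i (\g_{i,k} - \nabla f_i(\x_{i,k}))\|^2$.

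Next I would obtain two contraction recursions governed by the spectral gap $1-\lambda$. Using $\|\W - \tfrac{1}{m}\1\1^\top\|_2 = \lambda$ together with Young's inequality, lines~5 and~11 give
\begin{align*}
\sum_i \|\x_{i,k+1}-\xb_{k+1}\|^2 &\le \tfrac{1+\lambda^2}{2}\sum_i \|\x_{i,k}-\xb_k\|^2 + C_1\gamma^2 \sum_i \|\y_{i,k}-\bar{\y}_k\|^2, \\
\sum_i \|\y_{i,k+1}-\bar{\y}_{k+1}\|^2 &\le \tfrac{1+\lambda^2}{2}\sum_i \|\y_{i,k}-\bar{\y}_k\|^2 + C_2 \sum_i \|\g_{i,k+1}-\g_{i,k}\|^2,
\end{align*}
with constants $C_1,C_2$ depending only on $\lambda$. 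To close the recursion on the variance term, I would run the SARAH-style argument in expectation over each epoch $[q\lfloor k/q\rfloor,\, q\lfloor k/q\rfloor + q)$: conditioning on past iterates, $\Eb\|\g_{i,k+1} - \nabla f_i(\x_{i,k+1})\|^2$ telescopes into a sum of $\tfrac{L^2}{|\mathcal{S}_{i,s}|}\Eb\|\x_{i,s+1}-\x_{i,s}\|^2$ over the epoch, and the one-step drift $\|\x_{i,s+1}-\x_{i,s}\|^2$ can in turn be dominated by $\gamma^2(\|\bar{\y}_s\|^2 + \|\y_{i,s}-\bar{\y}_s\|^2)$ from line~5.

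Finally I would assemble the Lyapunov function
\begin{align*}
\Phi_k = f(\xb_k) + \tfrac{c_1}{m}\sum_i \|\x_{i,k}-\xb_k\|^2 + \tfrac{c_2}{m}\sum_i \|\y_{i,k}-\bar{\y}_k\|^2 + \tfrac{c_3}{m}\sum_i \Eb\|\g_{i,k} - \nabla f_i(\x_{i,k})\|^2
\end{align*}
and tune the weights $c_1, c_2, c_3$ so that all positive cross-terms produced above are absorbed into the negative descent terms $-\tfrac{\gamma}{2}\|\nabla f(\xb_k)\|^2$ and $-\tfrac{1-\lambda^2}{2m}\sum_i\|\x_{i,k}-\xb_k\|^2$. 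Summing the resulting inequality $\Eb[\Phi_{k+1}] \le \Eb[\Phi_k] - c\gamma(\|\nabla f(\xb_k)\|^2 + \tfrac{1}{m}\sum_i\|\x_{i,k}-\xb_k\|^2)$ from $k=0$ to $T-1$, dividing by $c\gamma T$, and using the lower bound on $f$ from Assumption~\ref{Assumption: function}(1) yields the stated $\mathcal{O}(1/T)$ rate. The main obstacle is balancing these four coupled recursions simultaneously: the variance-reduction error only contracts at epoch boundaries while the consensus and tracking errors contract every iteration, and the positive feedback from $\|\y_{i,k}-\bar{\y}_k\|^2$ into the consensus recursion and from $\|\x_{i,k}-\xb_k\|^2$ back into the tracking recursion is precisely what forces the intricate list of step-size conditions on $\gamma$ appearing in the theorem statement.
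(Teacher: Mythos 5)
Your proposal follows essentially the same route as the paper's proof: an $L$-smoothness descent step exploiting $\bar{\y}_k=\bar{\g}_k$, per-iteration contraction of the consensus and gradient-tracking errors governed by $\lambda$, a SPIDER-type epoch-wise telescoping bound on the gradient-estimator error, and a Lyapunov/telescoping argument in which the step-size conditions force the positive coefficients to be nonpositive. The only cosmetic differences are that the paper telescopes the estimator-error recursion directly over each epoch of length $q$ rather than carrying it as a fourth Lyapunov term, and its one-step drift bound also keeps a consensus contribution, $\|\x_k-\x_{k-1}\|^2 \le 8\|\x_{k-1}-\ot\bar{\x}_{k-1}\|^2 + 4\gamma^2\|\y_{k-1}-\ot\bar{\y}_{k-1}\|^2 + 4\gamma^2 m\|\bar{\y}_{k-1}\|^2$, which your sketch omits but which is absorbed in the same way.
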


\section{Case Study}\label{sec: case}

In this section, the proposed method is compared with other baseline service representation methods, graph learning methods and training methods. All the models were trained using CPU AMD 5950X and GPU NVDIA RTX 3090.

\subsection{Dataset}
In this study, a dataset of 238,846 claims of 1,005 patients was used to validate the proposed approach. In the graph model, there are 1,005 patient nodes, 2,233 service nodes, and 15,044 doctor nodes. The dataset was collected during 2010 to 2015. Each patient has a record range from 750 days to 1,750 days. The patient record length is distributed between 100 words and 1,500 words.

\subsection{Comparisons with Other Service Representation Methods}
To investigate which service embedding method is more suitable for this application, skip-gram, GloVe and BERT are implemented. Then the embedding are used as the input of a LSTM model for a binary classification of Alzheimer disease. The GloVe model is loaded with the pretrained parameters on Common Crawl which has 840B tokens and 2.2M vocab \cite{pennington2014glove}. Each word is converted to a 300-dimension vector. Then the model is fine-tuned with LSTM for the classification. To boost the performance of BERT, Med-BERT \cite{rasmy2021med}, a contextualized embedding model pretrained on a structured EHR dataset of 28,490,650 patients, is used to obtain the service embedding. Then the BERT is fine-tuned with LSTM. For the train-test split, 65\%, 15\% and 20\% of the 1005 patients are randomly selected as train, validation and test set. 

Table \ref{tab: prelim} showed that BERT-LSTM substantially improves the prediction accuracy by up to 9\%, boosting the AUC by up to 6\%. It should be noted that only the service sequences of patients are used for this preliminary experiment. It proves the effectiveness of the context-based embedding method in the clinic application. In the following experiments, the graph information including patient attribute, doctor attribute and edge information is used. And BERT is used to obtain the service embedding.

\begin{table}[h]
\begin{center}
\caption{Comparisons among different service embedding methods for the binary classification of Alzheimer disease} \label{tab: prelim}%
\begin{tabular}{@{}rll@{}}
\toprule
Methods  & Test recall & Test AUC\\
\midrule
SKip-gram-LSTM	& 0.53	& 0.54 \\
GLoVE-LSTM	    & 0.57	& 0.55 \\
\textbf{BERT-LSTM}	    & \textbf{0.63}	& \textbf{0.61} \\
\hline
\end{tabular}
\end{center}
\end{table}

\subsection{Comparisons with Other Graph Learning Methods} \label{sec: gl}
To validate the effectiveness of our proposed model, we compare HGAT with three baseline models: graph convolutional network (GCN), graph neural network (GNN) and graph attention network (GAT). All the four models used 64 hidden neuron, 128 embedding dimension and 4 layers of the modules. In GAT and HGAT, the number of head of the multi-head attention is set as 4. For the train-test split, 65\%, 15\% and 20\% of the 1,005 patients are randomly selected as train, validation and test set. In the training of the model, only EHR in train set are used to generate the graph. In the validation and testing, the validation set and test set are incorporated into the train graph and generate the prediction. For each patient, the first 65\% record are used to construct the graph and the remaining 35\% is masked. In the testing, the model is aimed to predict the 5$\sim$10 ground-truth doctors from a set of 200$\sim$350 doctors.

The results are shown in Table \ref{tab: HGAT}. In terms of recall, HGAT can obtain 70\%, meaning that the model can find 70\% of the doctors appeared in the future record. We can see that although the HGAT is only 2\% higher than GAT for test recall, the AUC of HGAT is 6\% higher than that of GAT. And the performance of attention-based graph learning models, i.e., GAT and HGAT, is superior to the others.

\begin{table}[h]
\begin{center}
\caption{Comparisons among GCN, GNN, GAT and HGAT} \label{tab: HGAT}%
\begin{tabular}{@{}lll@{}}
\toprule
Methods  & Test recall & Test AUC\\
\midrule

GNN \cite{scarselli2008graph} & 0.63	& 0.65 \\
GCN \cite{kipf2016semi}	& 0.66	& 0.64 \\
GAT \cite{velickovic2017graph} & 0.68	& 0.70 \\
\textbf{HGAT}& \textbf{0.70}	& \textbf{0.76} \\
\hline
\end{tabular}
\end{center}
\end{table}

\subsection{Comparisons with Global Training and Local Training}

To simulate the situation that the EHR data is stored in different centers, we divided all the patients into six groups based on their location (state level). Each group have about 145, 158, 177, 207, 147 and 171 patients,respectively. It is assumed that the EHR data in each group is under an agreement and stored in one data center. A consensus matrix $\mathbf{W}$ is generated to represent the connection between the six groups as in Eq. \ref{Eq: cm}. The train-test split and labeling are the same as described in Sec. \ref{sec: gl}. The base model is HGAT used in \ref{sec: gl}.

\begin{equation}\label{Eq: cm}
\mathbf{W}=
\begin{bmatrix}
0.64 & 0.18 & 0 & 0  & 0.18 & 0\\
0.18 & 0.64 & 0 & 0  & 0 & 0.18\\
0 & 0 & 0.82 & 0.18  & 0 & 0\\
0 & 0 & 0.18 & 0.64  & 0  & 0.18\\
0.18 & 0 & 0 & 0  & 0.82 & 0 \\
0 & 0.18 & 0 & 0.18  &  0 & 0.64\\
\end{bmatrix}
\end{equation}

The comparison results are shown in Table \ref{tab: FDL}. Compared to the local training, the performance of FDL is superior. For example,  for region 1, 5 and 6, the test recall and test AUC of FDL is 14\%, 19\% higher than that of the local training, respectively. The reason is that in the local training, the generalization ability of the model is limited due to the lack of train data. Compared to the global training, FDL can achieve a comparable performance. The difference in test recall and test AUC is between 2\% and 5\%. Our proposed FDL yields the fictitious fusion center’s performance. The proof of FDL convergence is given in the supplementary section.

\begin{table}[h]
\begin{center}
\caption{Performance of federated decentralized learning on the six regions. Local: each local model is trained using the local graph only; Global: a fusion center is trained using the global graph. FDL: each local model is trained using proposed federated decentralized learning method, namely, Algorithm \ref{Algorithm}. The Recall and AUC are the model performance on the test dataset. } \label{tab: FDL}%
\begin{tabular}{@{}rllrll@{}}
\toprule
Methods  & Recall & AUC & Methods  & Recall & AUC\\
\midrule
Local-region1	& 0.66	& 0.54 & Local-region4	& 0.66	& 0.74 \\
Global-region1  & 0.73	& 0.74 & Global-region4	& 0.76	& 0.71 \\
FDL-region1     & 0.70	& 0.73 & FDL-region4	& 0.72	& 0.64 \\
\hline
Local-region2 & 0.72	& 0.78   & Local-region5	& 0.51	& 0.53 \\
Global-region2 & 0.70 & 0.77    & Global-region5	& 0.65	& 0.66 \\
FDL-region2   & 0.71	& 0.77  & FDL-region5	& 0.65	& 0.60 \\
\hline
Local-region3& 0.65	& 0.76     & Local-region6	& 0.62	& 0.58 \\
Global-region3& 0.68 & 0.79    & Global-region6	& 0.68	& 0.80 \\
FDL-region3& 0.72	& 0.76     & FDL-region6	& 0.66	& 0.75 \\
\hline
\end{tabular}
\end{center}
\end{table}

\section{Conclusion} \label{sec: conclusion}

Accurate doctor recommendation is the key to developing telehealth services and improving running efficiency of the healthcare system. This paper proposed a heterogeneous graph based model for doctor recommendation using EHR.  First, a latent vector representing the edge between two types of heterogeneous nodes is incorporated into the message passing. Second, to enrich the representation of the doctor, patient and service node, three relationships, i.e., service-service, doctor-speciality, and patient-doctor, are reconstructed simultaneously. Moreover, to solve the data privacy issue in the medical area, a federated decentralized learning is developed to help improve the local model performance. The results show that the proposed model FD-GATDR can achieve a high prediction accuracy. In future, we will use a bi-level optimization model to improve the model prediction accuracy as well as the communication efficiency for more complex heterogeneous graph.

\bibliographystyle{ACM-Reference-Format}


\onecolumn
\newpage
\appendix
\section{Supporting Lemma}
\begin{lem}[Iterates Contraction]
The following contraction properties of the iterates hold:
\begin{align} \label{iteratescontraction}
\|\x_k-\ot\bar{\x}_k\|^2 \le & (1+c_1)\lambda^2\|\x_{k-1} -\ot\bar{\x}_{k-1} \|^2 + (1+\frac{1}{c_1}) \gamma^2\|\y_{k-1}-\ot \bar\y_{k-1}\|^2, \\
\|\y_k-\ot\bar\y_k\|^2 \le& 
(1+ c_2)\lambda^2\|\y_{k-1} - \ot \bar\y_{k-1}\|^2 + (1 + \frac{1}{ c_2})\|\g_k-\g_{k-1} \|^2,
\end{align}
where $c_1$ and $c_2$ are arbitrary positive constants.
Additionally, we have 
\begin{align}\label{eqs30}
\|\x_k-\x_{k-1}\|^2 
& \le
8\|(\x_{k-1} - \ot\bar{\x}_{k-1}) \|^2 + 4\gamma^2 \|\y_{k-1} - \ot\bar\y_{k-1}\|^2 + 4\gamma^2m\|\bar\y_{k-1}\|^2.
\end{align}
\end{lem}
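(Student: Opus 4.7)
My plan is to work with the stacked iterates $\x_k, \y_k, \g_k \in \mathbb{R}^{mp}$, interpreting $\W$ as the consensus matrix acting blockwise. In this notation Algorithm~\ref{Algorithm} executes $\x_k = \W\x_{k-1} - \gamma\y_{k-1}$ and $\y_k = \W\y_{k-1} + \g_k - \g_{k-1}$. Multiplying each update by $\tfrac{1}{m}\1^\top$ blockwise and using $\1^\top\W = \1^\top$ gives $\bar\x_k = \bar\x_{k-1} - \gamma\bar\y_{k-1}$ and $\bar\y_k = \bar\y_{k-1} + \bar\g_k - \bar\g_{k-1}$. I would use two workhorse facts throughout: (i) $\W(\ot v) = \ot v$ for any $v \in \mathbb{R}^p$; and (ii) $\|\W u\|^2 \le \lambda^2 \|u\|^2$ whenever $\sum_i u_i = \0$, which follows from the spectral-gap assumption on $\W$.

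For the first inequality, I would subtract $\ot\bar\x_k$ from the $\x$-update and use (i) to obtain $\x_k - \ot\bar\x_k = \W(\x_{k-1} - \ot\bar\x_{k-1}) - \gamma(\y_{k-1} - \ot\bar\y_{k-1})$, then apply the Young-type bound $\|a - b\|^2 \le (1+c_1)\|a\|^2 + (1+1/c_1)\|b\|^2$, and finally absorb the $\W$ factor by (ii), which is legal since $\x_{k-1} - \ot\bar\x_{k-1}$ has zero block-average by construction. The second inequality proceeds identically: subtract $\ot\bar\y_k$ and rewrite $\ot\bar\g_k - \ot\bar\g_{k-1} = \tfrac{1}{m}\1\1^\top(\g_k - \g_{k-1})$ to get $\y_k - \ot\bar\y_k = \W(\y_{k-1} - \ot\bar\y_{k-1}) + (\I - \tfrac{1}{m}\1\1^\top)(\g_k - \g_{k-1})$, then apply Young's inequality with parameter $c_2$, use that $\I - \tfrac{1}{m}\1\1^\top$ is an orthogonal projection (norm at most one), and invoke (ii) on the first piece.

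For the third inequality, I would rewrite $\x_k - \x_{k-1} = (\W-\I)(\x_{k-1} - \ot\bar\x_{k-1}) - \gamma\y_{k-1}$ by inserting $(\W-\I)\ot\bar\x_{k-1} = \0$. To recover the stated constants $8, 4\gamma^2, 4\gamma^2 m$, the right order of operations is to apply $\|a+b\|^2 \le 2\|a\|^2 + 2\|b\|^2$ once to separate the $(\W-\I)$-term from $-\gamma\y_{k-1}$, then decompose $\y_{k-1} = (\y_{k-1} - \ot\bar\y_{k-1}) + \ot\bar\y_{k-1}$ and apply the same bound again, and finally use $\|(\W-\I)u\|^2 \le 4\|u\|^2$ on the consensus-orthogonal subspace (where $\W - \I$ has eigenvalues in $(-2,0]$) together with $\|\ot\bar\y_{k-1}\|^2 = m\|\bar\y_{k-1}\|^2$.

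I do not expect deep obstacles; the whole lemma reduces to Young's inequality plus the spectral-gap bound (ii). The two points that genuinely need care are (a) verifying at each step that the vector fed into (ii) really is orthogonal to $\1$ in the block sense, and (b) in the third inequality, choosing a two-step Young splitting rather than a direct three-term bound $\|a+b+c\|^2 \le 3(\|a\|^2+\|b\|^2+\|c\|^2)$, since only the former delivers the stated leading constant $8$ instead of a looser $12$.
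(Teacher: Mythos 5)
Your proposal is correct and follows essentially the same route as the paper's proof: the same decompositions $\x_k-\ot\bar{\x}_k=\W(\x_{k-1}-\ot\bar{\x}_{k-1})-\gamma(\y_{k-1}-\ot\bar{\y}_{k-1})$ and $\y_k-\ot\bar{\y}_k=\W(\y_{k-1}-\ot\bar{\y}_{k-1})+(\I-\tfrac{1}{m}\1\1^\top)(\g_k-\g_{k-1})$, Young's inequality with parameters $c_1,c_2$, the spectral bound $\|\W u\|\le\lambda\|u\|$ on the consensus-orthogonal subspace, and the two-step splitting with $\|(\W-\I)u\|\le 2\|u\|$ that yields the constants $8$, $4\gamma^2$, $4\gamma^2 m$. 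The only cosmetic difference is that the paper writes the mixing operator explicitly as $\W\otimes\I$ acting on the stacked iterates, which is the same blockwise action you describe.
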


\begin{proof}
 Define $\Wt = \W \otimes \I_m$. First for the iterates $\x_k$, we have the following contraction:
\begin{align}\label{eqs32}
\|\Wt\x_{k} -\ot\bar\x_{k} \|^2 = \|\Wt(\x_{k} -\ot\bar\x_{k}) \|^2 \le \lambda^2\|\x_{k} -\ot\bar{\x}_{k}\|^2.
\end{align}
This is because $\x_{k} -\ot\x_{k}$ is orthogonal $\1,$ which is the eigenvector corresponding to the largest eigenvalue of $\Wt,$ and $\lambda = \max\{|\lambda_2|,|\lambda_m|\}.$
Recall that $\bar{\x}_k = \bar{\x}_{k-1} - \gamma\bar\y_{k-1},$ hence,
\begin{align}
&
\|\x_k-\ot\bar{\x}_k\|^2
=
\|\Wt\x_{k-1} - \gamma\bar\y_{k-1}-\ot(\bar{\x}_{k-1} - \gamma\bar\y_{k-1})\|^2 \notag\\
&
\stackrel{(a)}{\leq}
(1+c_1)\|\Wt\x_{k-1} -\ot\bar{\x}_{k-1} \|^2 + (1+\frac{1}{c_1}) \gamma^2\|\bar\y_{k-1}-\ot \bar\y_{k-1}\|^2 \notag\\
&
\stackrel{(b)}{\leq}
(1+c_1)\lambda^2\|\x_{k-1} -\ot\bar{\x}_{k-1} \|^2 + (1+\frac{1}{c_1}) \gamma^2\|\bar\y_{k-1}-\ot \bar\y_{k-1}\|^2, 
\end{align}
where (a) is becase of triangle inequality and (b) is from eqs.(\ref{eqs32}).

For $\bar\y_k$, we have
\begin{align}
&\|\y_k-\ot\bar\y_k\|^2 
\notag\\=&
\|\Wt \y_{k-1} + \g_k-\g_{k-1}  - \ot \big(\bar\y_{k-1} +\bar\g_k-\bar\g_{k-1} \big)\|^2 \notag\\
\le&
(1+ c_2)\lambda^2\|\y_{k-1} - \ot \bar\y_{k-1}\|^2+ (1 + \frac{1}{ c_2})\|\g_k-\g_{k-1}-\ot\big(\bar\g_k-\bar\g_{k-1} \big)\|^2\notag\\
\le&
(1+ c_2)\lambda^2\|\y_{k-1} - \ot \bar\y_{k-1}\|^2 + (1 + \frac{1}{ c_2})\|\big(\I - \frac{1}{n}(\1\1^\top)\otimes\I\big)\big(\g_k-\g_{k-1}\big)\|^2\notag\\
\stackrel{(a)}{\le}&
(1+ c_2)\lambda^2\|\y_{k-1} - \ot \bar\y_{k-1}\|^2 + (1 + \frac{1}{ c_2})\|\g_k-\g_{k-1} \|^2,
\end{align}
where (a) is due to $\|\I-\frac{1}{m}(\1\1^\top)\otimes \I \|\le 1.$ 

According to the updating
\begin{align}\label{16}
&
\|\x_k-\x_{k-1}\|^2 
= 
\|\Wt \x_{k-1} -\gamma\y_{k-1} - \x_{k-1}\|^2 \notag\\
=
&
\|(\Wt -\I)\x_{k-1} - \gamma\y_{k-1}\|^2
\le 2\|(\Wt -\I)\x_{k-1} \|^2 + 2\gamma^2 \|\y_{k-1}\|^2 \notag\\
=
&
2\|(\Wt -\I)(\x_{k-1} - \ot\bar{\x}_{k-1}) \|^2 + 2\gamma^2 \|\y_{k-1}\|^2 \notag\\
\stackrel{}{\le}
&
8\|(\x_{k-1} - \ot\bar{\x}_{k-1}) \|^2 + 4\gamma^2 \|\y_{k-1} - \ot\bar\y_{k-1}\|^2 + 4\gamma^2m\|\bar\y_{k-1}\|^2.
\end{align}
    
\end{proof}

\section{Proof of Theorem}

Next, we provide the proofs of Theorem \ref{Thm}.

\begin{proof}
	According to the algorithm update, we have:
	\begin{align}\label{eqs1}
		&
		f(\bar{\x}_{k+1}) - f(\bar{\x}_{k}) 
		\stackrel{(a)}{\le}
		\langle \nabla f(\bar{\x}_k), \bar{\x}_{k+1} - \bar{\x}_{k} \rangle + \frac{L}{2}\|\bar{\x}_{k+1} - \bar{\x}_{k}\|^2 \notag\\
		\stackrel{(b)}{=} 
		&
		-\gamma \langle \nabla f(\bar{\x}_k), \bar{\y}_{k} \rangle + \frac{L\gamma^2}{2}\|\bar{\y}_{k}\|^2 \notag\\
		= 
		&
		-\frac{\gamma}{2} \|\nabla f(\bar{\x}_k)\|^2 - (\frac{\gamma}{2} - \frac{L\gamma^2}{2})\|\bar{\y}_{k}\|^2  + \frac{\gamma}{2} \|\nabla f(\bar{\x}_k) - \bar{\y}_{k}\|^2 \notag\\
		=&-\frac{\gamma}{2} \|\nabla f(\bar{\x}_k)\|^2 - (\frac{\gamma}{2} - \frac{L\gamma^2}{2})\|\bar{\y}_{k}\|^2  + \frac{\gamma}{2} \|\nabla f(\bar{\x}_k) -\frac{1}{m} \sum_{i=1}^m \nabla f(\x_{i,k})+\frac{1}{m} \sum_{i=1}^m \nabla f(\x_{i,k})  - \bar{\y}_{k}\|^2 \notag\\
	\stackrel{(c)}{\le}&-\frac{\gamma}{2} \|\nabla f(\bar{\x}_k)\|^2 - (\frac{\gamma}{2} - \frac{L\gamma^2}{2})\|\bar{\y}_{k}\|^2  + 
		{\gamma}\|\nabla f(\bar{\x}_k) -\frac{1}{m} \sum_{i=1}^m\nabla f(\x_{i,k})\|^2+{\gamma}\|\frac{1}{m} \sum_{i=1}^m \nabla f(\x_{i,k})  - \bar{\y}_{k}\|^2 \notag\\
		\stackrel{(d)}{\le}&-\frac{\gamma}{2} \|\nabla f(\bar{\x}_k)\|^2 - (\frac{\gamma}{2} - \frac{L\gamma^2}{2})\|\bar{\y}_{k}\|^2  + 
		\frac{{\gamma}}{m} \sum_{i=1}^m\|\nabla f(\bar{\x}_k) - \nabla f(\x_{i,k})\|^2+{\gamma}\|\frac{1}{m} \sum_{i=1}^m \nabla f(\x_{i,k})  - \bar{\y}_{k}\|^2 \notag\\
		\stackrel{(e)}{\le}&-\frac{\gamma}{2} \|\nabla f(\bar{\x}_k)\|^2 - (\frac{\gamma}{2} - \frac{L\gamma^2}{2})\|\bar{\y}_{k}\|^2  + 
		\frac{{\gamma L^2}}{m} \sum_{i=1}^m\|{\x}_{i,k} - \bar\x_k\|^2+{\gamma}\|\frac{1}{m} \sum_{i=1}^m \nabla f(\x_{i,k})  - \bar{\g}_{k}\|^2 \notag\\
				\stackrel{(f)}{\le}&-\frac{\gamma}{2} \|\nabla f(\bar{\x}_k)\|^2 - (\frac{\gamma}{2} - \frac{L\gamma^2}{2})\|\bar{\y}_{k}\|^2  + 
		\frac{{\gamma L^2}}{m} \sum_{i=1}^m\|{\x}_{i,k} - \bar\x_k\|^2+{\gamma}\frac{1}{m} \sum_{i=1}^m \|\nabla f(\x_{i,k})  - \g_{i,k}\|^2 \notag\\
	\end{align}
	where (a) is because of Lipschitz continuous gradients of $f$, (b) follows from the update rule of $\x$. (c), (d) and (f) are follow from the triangle inequality. (e) is because $\bar\y_t=\bar\g_t$ and Lipschitz continuous gradients of $f$

Next, we bound the error of the gradient estimators as the follows:
\begin{align}\label{eqs18}
\mathbb{E}&\left\|\g_{i,t}-\nabla f\left(\x_{i,t}\right)\right\|^{2} 	\stackrel{(a)}{\le} \frac{L^{2}}{|\mathcal{S}_{i,k}|} \mathbb{E}\left\|\x_{i,k}-\x_{i,k-1}\right\|^{2}+\mathbb{E}\left\|\g_{i,k-1}-\nabla f\left(\x_{i,k-1}\right)\right\|^{2}\notag\\
	\stackrel{(b)}{\le}&\sum_{t=\left(n_{k}-1\right) q}^{k-1} \frac{L^{2}}{|\mathcal{S}_{i,t}|} \mathbb{E}\left\|\x_{i,t+1}-\x_{i,t}\right\|^{2}+\mathbb{E}\left\|\g_{i,\left(n_{k}-1\right) q}-\nabla f\left(\x_{i,\left(n_{k}-1\right) q}\right)\right\|^{2} \notag\\  \stackrel{(c)}{\le}
& \sum_{t=\left(n_{k}-1\right) q}^{k} \frac{L^{2}}{|\mathcal{S}_{i,t}|}\ \mathbb{E}\left\|\x_{i,t+1}-\x_{i,t}\right\|^{2}+\mathbb{E}\left\|\g_{i,\left(n_{k}-1\right) q}-\nabla f\left(\x_{i,\left(n_{k}-1\right) q}\right)\right\|^{2} \notag\\ \stackrel{(d)}{=}
& \sum_{t=\left(n_{k}-1\right) q}^{k} \frac{L^{2}}{|\mathcal{S}_{i,t}|}\ \mathbb{E}\left\|\x_{i,t+1}-\x_{i,t}\right\|^{2},
\end{align}
where (a) follows from Lemma 1 in \cite{fang2018spider}, (b)is telescoping the result of (a) over $k$ from $\left(n_{k}-1\right) q+1$ to $k$, where $k \leq n_{k} q-1$. (c) extends $k-1$ to $k$ and (d) is because our algorithm calculate exact full gradients every $q$ iterations.

Next, taking the expectation of both sides of eqs.(\ref{eqs1}) and plugging the result in eqs.(\ref{eqs18}), we have
	\begin{align}\label{eqs19}
	\mathbb{E}	f(\bar{\x}_{k+1}) - 	&	\mathbb{E}f(\bar{\x}_{k}) 
\le -\frac{\gamma}{2} 	\mathbb{E}\|\nabla f(\bar{\x}_k)\|^2 - (\frac{\gamma}{2} - \frac{L\gamma^2}{2})	\mathbb{E}\|\bar{\y}_{k}\|^2 \notag\\& + 
		\frac{{\gamma L^2}}{m} \sum_{i=1}^m	\mathbb{E}\|{\x}_{i,k} - \bar\x_k\|^2+{\gamma}\frac{1}{m} \sum_{i=1}^m 	\mathbb{E}\|\nabla f(\x_{i,k})  - \g_{i,k}\|^2\notag\\
\le& -\frac{\gamma}{2} 	\mathbb{E}\|\nabla f(\bar{\x}_k)\|^2 - (\frac{\gamma}{2} - \frac{L\gamma^2}{2})	\mathbb{E}\|\bar{\y}_{k}\|^2 \notag\\& + 
		\frac{{\gamma L^2}}{m} \sum_{i=1}^m	\mathbb{E}\|{\x}_{i,k} - \bar\x_k\|^2+{\gamma}\frac{1}{m} \sum_{i=1}^m 	\sum_{t=\left(n_{k}-1\right) q}^{k} \frac{L^{2}}{|\mathcal{S}_{i,t}|}\ \mathbb{E}\left\|\x_{i,t+1}-\x_{i,t}\right\|^{2}
	\end{align}

Next, telescoping eqs.\ref{eqs19}  over $k$ from $\left(n_{k}-1\right) q$ to $k$ where $k \leq n_{k} q-1$ and since $q=|\mathcal{S}_{i,t}|=\lceil \sqrt{n}\rceil$, we have
	\begin{align}\label{eqs20}
&	\mathbb{E}	f(\bar{\x}_{k+1}) - 		\mathbb{E}f(\bar{\x}_{(n_k-1)q} ) 
\le -\frac{\gamma}{2} 	\mathbb{E}\sum_{t=\left(n_{k}-1\right) q}^k \|\nabla f(\bar{\x}_{t})\|^2 - (\frac{\gamma}{2} - \frac{L\gamma^2}{2})	\sum_{t=\left(n_{k}-1\right) q}^k \mathbb{E}\|\bar{\y}_{t}\|^2 \notag\\& + 
		\frac{{\gamma L^2}}{m} \sum_{t=\left(n_{k}-1\right) q}^k\sum_{i=1}^m	\mathbb{E}\|{\x}_{i,t} - \bar\x_t\|^2+{\gamma}\frac{1}{m} \sum_{i=1}^m \sum_{j=\left(n_{k}-1\right) q}^k	\sum_{t=\left(n_{k}-1\right) q}^{j} \frac{L^{2}}{|\mathcal{S}_{i,t}|}\ \mathbb{E}\left\|\x_{i,t+1}-\x_{i,t}\right\|^{2}\notag\\
&\le -\frac{\gamma}{2} 	\mathbb{E}\sum_{t=\left(n_{k}-1\right) q}^k \|\nabla f(\bar{\x}_{t})\|^2 - (\frac{\gamma}{2} - \frac{L\gamma^2}{2})	\sum_{t=\left(n_{k}-1\right) q}^k \mathbb{E}\|\bar{\y}_{t}\|^2 \notag\\& + 
		\frac{{\gamma L^2}}{m} \sum_{t=\left(n_{k}-1\right) q}^k\sum_{i=1}^m	\mathbb{E}\|{\x}_{i,t} - \bar\x_t\|^2+{\gamma}\frac{1}{m} \sum_{i=1}^m \sum_{j=\left(n_{k}-1\right) q}^k	\sum_{t=\left(n_{k}-1\right) q}^{k} \frac{L^{2}}{|\mathcal{S}_{i,t}|}\ \mathbb{E}\left\|\x_{i,t+1}-\x_{i,t}\right\|^{2}\notag\\
	&= -\frac{\gamma}{2} 	\mathbb{E}\sum_{t=\left(n_{k}-1\right) q}^k \|\nabla f(\bar{\x}_{t})\|^2 - (\frac{\gamma}{2} - \frac{L\gamma^2}{2})	\sum_{t=\left(n_{k}-1\right) q}^k \mathbb{E}\|\bar{\y}_{t}\|^2 \notag\\& + 
		\frac{{\gamma L^2}}{m} \sum_{t=\left(n_{k}-1\right) q}^k\sum_{i=1}^m	\mathbb{E}\|{\x}_{i,t} - \bar\x_t\|^2+{\gamma}\frac{1}{m} \sum_{i=1}^m \sum_{t=\left(n_{k}-1\right) q}^k	q \frac{L^{2}}{|\mathcal{S}_{i,t}|}\ \mathbb{E}\left\|\x_{i,t+1}-\x_{i,t}\right\|^{2}\notag\\
	&= -\frac{\gamma}{2} 	\mathbb{E}\sum_{t=\left(n_{k}-1\right) q}^k \|\nabla f(\bar{\x}_{t})\|^2 - (\frac{\gamma}{2} - \frac{L\gamma^2}{2})	\sum_{t=\left(n_{k}-1\right) q}^k \mathbb{E}\|\bar{\y}_{t}\|^2 \notag\\& + 
		\frac{{\gamma L^2}}{m} \sum_{t=\left(n_{k}-1\right) q}^k\sum_{i=1}^m	\mathbb{E}\|{\x}_{i,t} - \bar\x_t\|^2+{\gamma} \sum_{t=\left(n_{k}-1\right) q}^k L^2  \mathbb{E}\left\|\x_{t+1}-\x_{t}\right\|^{2}.
	\end{align}

Next, combing the result of eqs.(\ref{eqs20}), eqs.(\ref{iteratescontraction})- (\ref{eqs30}) and telescoping over $k$ from $0$ to $K$, we have
	\begin{align}\label{eqs21}\notag\\
&	\mathbb{E}	f(\bar{\x}_{K+1}) - \mathbb{E}f(\bar{\x}_{0}) 
	+\|\x_K-\ot\bar{\x}_K\|^2-\|\x_0-\ot\bar{\x}_0\|^2 \notag\\&+\gamma[ \|\y_K-\ot\bar{\y}_K\|^2-\|\y_0-\ot\bar{\y}_0\|^2 ] \notag\\
\le& -\frac{\gamma}{2} \sum_{k=0}^K	\mathbb{E} \|\nabla f(\bar{\x}_k)\|^2 - (\frac{\gamma}{2} - \frac{L\gamma^2}{2})	\sum_{k=0}^K	\mathbb{E}\|\bar{\y}_{k}\|^2 \notag\\& + 
	{	\gamma L^2} \sum_{k=0}^K		\mathbb{E}\|{\x}_{k} - \bar\x_k\|^2+{\gamma} \sum_{k=0}^KL^2  \mathbb{E}\left\|\x_{k+1}-\x_{k}\right\|^{2}\notag\\
		&+((1+c_1)\lambda^2 -1)\sum_{k=0}^K\mathbb{E} \|\x_{k} -\ot\bar{\x}_{k} \|^2 + (1+\frac{1}{c_1}) \gamma^2 \sum_{k=0}^K\mathbb{E}\|\y_{k}-\ot \bar\y_{k}\|^2\notag\\
		&+((1+ c_2)\lambda^2-1)\gamma \sum_{k=0}^K \mathbb{E}\|\y_{k} - \ot \bar\y_{k}\|^2 + (1 + \frac{1}{ c_2}) \gamma \sum_{k=0}^K \mathbb{E}\|\g_{k+1}-\g_{k} \|^2.
	\end{align}

For $\mathbb{E}\|\g_k-\g_{k-1} \|^2 $ and let $\nabla f_t = [\nabla f({{\x}}_{1,t})^\top,\cdots,\nabla f({{\x}}_{m,t})^\top]^{\top} $, we have
\begin{align}\label{22}
&
\Eb\|\g_k-\g_{k-1} \|^2 = \Eb\|\g_k -\nabla_{{{\x}}}f_t + \nabla_{{{\x}}}f_t - \nabla_{{{\x}}}f_{t-1} + \nabla_{{{\x}}}f_{t-1} - \v_{t-1}\|^2  \notag\\
\le
& 3\Eb\|\g_k - \nabla_{{{\x}}}f_{t}\|^2  + 3 \Eb\|\nabla_{{{\x}}} f_{t}  - \nabla_{{{\x}}} f_{t-1}\|^2 + 3\Eb\|\nabla_{{{\x}}} f_{t-1} - \g_{k-1}\|^2 \notag\\
\le
& 0+ 3L^2 \Eb\|{{\x}}_{t}-{{\x}}_{t-1}\|^2 +0 .
\end{align}

Next, combing the result of eqs.(\ref{eqs21}), eqs.(\ref{22}) and eqs.(\ref{16}), we have
	\begin{align}\label{eqs23}\notag\\
&	\mathbb{E}	f(\bar{\x}_{K+1}) - \mathbb{E}f(\bar{\x}_{0}) 
	+\|\x_K-\ot\bar{\x}_K\|^2-\|\x_0-\ot\bar{\x}_0\|^2 \notag\\&+\gamma[ \|\y_K-\ot\bar{\y}_K\|^2-\|\y_0-\ot\bar{\y}_0\|^2 ] \notag\\
\le& -\frac{\gamma}{2} \sum_{k=0}^K	\mathbb{E} \|\nabla f(\bar{\x}_k)\|^2 - (\frac{\gamma}{2} - \frac{L\gamma^2}{2})	\sum_{k=0}^K	\mathbb{E}\|\bar{\y}_{k}\|^2 \notag\\& + 
	{	\gamma L^2} \sum_{k=0}^K		\mathbb{E}\|{\x}_{k} - \bar\x_k\|^2+{\gamma} \sum_{k=0}^KL^2  \mathbb{E}\left\|\x_{k+1}-\x_{k}\right\|^{2}\notag\\
		&+((1+c_1)\lambda^2 -1)\sum_{k=0}^K\mathbb{E} \|\x_{k} -\ot\bar{\x}_{k} \|^2 + (1+\frac{1}{c_1}) \gamma^2 \sum_{k=0}^K\mathbb{E}\|\y_{k}-\ot \bar\y_{k}\|^2\notag\\
		&+((1+ c_2)\lambda^2-1)\gamma \sum_{k=0}^K \mathbb{E}\|\y_{k} - \ot \bar\y_{k}\|^2 + (1 + \frac{1}{ c_2}) \gamma 3 L^2 \sum_{k=0}^K \mathbb{E}\|\x_{k+1}-\x_{k} \|^2\notag\\
\le& -\frac{\gamma}{2} \sum_{k=0}^K	\mathbb{E} \|\nabla f(\bar{\x}_k)\|^2 - (\frac{\gamma}{2} - \frac{L\gamma^2}{2})	\sum_{k=0}^K	\mathbb{E}\|\bar{\y}_{k}\|^2 \notag\\& + 
	{	\gamma L^2} \sum_{k=0}^K		\mathbb{E}\|{\x}_{k} - \bar\x_k\|^2
		+((1+c_1)\lambda^2 -1)\sum_{k=0}^K\mathbb{E} \|\x_{k} -\ot\bar{\x}_{k} \|^2\notag\\ &+ (1+\frac{1}{c_1}) \gamma^2 \sum_{k=0}^K\mathbb{E}\|\y_{k}-\ot \bar\y_{k}\|^2
		+((1+ c_2)\lambda^2-1)\gamma \sum_{k=0}^K \mathbb{E}\|\y_{k} - \ot \bar\y_{k}\|^2 \notag\\
		&\!+\! [(1 \!+\! \frac{1}{ c_2}) \gamma 3 L^2 \!+ \!\gamma L^2] \sum_{k=0}^K \mathbb{E}[8\|(\x_{k} \!-\!\! \ot\!\bar{\x}_{k}) \|^2 \!+\! 4\gamma^2 \|\y_{k}\! \!-\! \!\ot\!\bar\y_{k}\|^2 \!+\! 4\gamma^2m\|\bar\y_{k}\|^2]\notag\\
= & -\frac{\gamma}{2} \sum_{k=0}^K	\mathbb{E} \|\nabla f(\bar{\x}_k)\|^2 - (\frac{\gamma}{2} - \frac{L\gamma^2}{2}-4\gamma^2m [(1 \!+\! \frac{1}{ c_2}) \gamma 3 L^2 \!+ \!\gamma L^2] )	\sum_{k=0}^K	\mathbb{E}\|\bar{\y}_{k}\|^2 \notag\\& + 
		\{ (1+c_1)\lambda^2 -1+	{	\gamma L^2} +8 [(1 \!+\! \frac{1}{ c_2}) \gamma 3 L^2 \!+ \!\gamma L^2] \}\sum_{k=0}^K\mathbb{E} \|\x_{k} -\ot\bar{\x}_{k} \|^2\notag\\ &+ 
		[((1+ c_2)\lambda^2-1)\gamma+ (1+\frac{1}{c_1}) \gamma^2 + 4\gamma^2[(1 \!+\! \frac{1}{ c_2}) \gamma 3 L^2 \!+ \!\gamma L^2] ] \sum_{k=0}^K \mathbb{E}\|\y_{k} - \ot \bar\y_{k}\|^2\notag\\
= & -\frac{\gamma}{2} \sum_{k=0}^K	\mathbb{E} \|\nabla f(\bar{\x}_k)\|^2\! +\! C_1	\sum_{k=0}^K	\mathbb{E}\|\bar{\y}_{k}\|^2 \!+\!C_2 \sum_{k=0}^K\mathbb{E} \|\x_{k}\! -\!\ot\!\bar{\x}_{k} \|^2\!+\!C_3 \sum_{k=0}^K \mathbb{E}\|\y_{k} \!-\! \ot\! \bar\y_{k}\|^2,
	\end{align}
where 
	\begin{align}
C_1=& -(\frac{\gamma}{2} - \frac{L\gamma^2}{2}-4\gamma^2m [(1 \!+\! \frac{1}{ c_2}) \gamma 3 L^2 \!+ \!\gamma L^2] ) \notag\\
C_2=& \{ (1+c_1)\lambda^2 -1+	{	\gamma L^2} +8 [(1 \!+\! \frac{1}{ c_2}) \gamma 3 L^2 \!+ \!\gamma L^2] \} \notag\\
C_3=& 	[((1+ c_2)\lambda^2-1)\gamma+ (1+\frac{1}{c_1}) \gamma^2 + 4\gamma^2[(1 \!+\! \frac{1}{ c_2}) \gamma 3 L^2 \!+ \!\gamma L^2] ].
	\end{align}

Next, let $c_1=c_2=\frac{1}{\lambda}-1$, we have
	\begin{align}
C_1=& (- \frac{\gamma}{2} + \frac{L\gamma^2}{2}+4\gamma^2m [(1 \!+\! \frac{1}{ c_2}) \gamma 3 L^2 \!+ \!\gamma L^2] ) \leq - \frac{\gamma}{2}+ \frac{\gamma}{6}+ \frac{\gamma}{6}+ \frac{\gamma}{6}=0,
	\end{align}
where $\gamma \leq \min\{ \frac{1}{3L}, \sqrt{\frac{1-\lambda}{72m L^2}},\sqrt{\frac{1}{24m L^2}} \}$.
	\begin{align}
C_2=& \{ (1+c_1)\lambda^2 -1+	{	\gamma L^2} +8 [(1 \!+\! \frac{1}{ c_2}) \gamma 3 L^2 \!+ \!\gamma L^2] \} \notag\\ =& \lambda -1+	{	\gamma L^2} +8 [\frac{1}{1-\lambda} \gamma 3 L^2 \!+ \!\gamma L^2] \leq \frac{1}{5} -1 +\frac{1}{5} +\frac{1}{5} +\frac{1}{5} =-\frac{1}{5}, 
	\end{align}
where $\gamma \leq \min\{ \frac{1}{5},  \frac{1}{40L^2},\frac{1-\lambda}{120L^2} \}$.
	\begin{align}
C_3=& 	[((1+ c_2)\lambda^2-1)\gamma+ (1+\frac{1}{c_1}) \gamma^2 + 4\gamma^2[(1 \!+\! \frac{1}{ c_2}) \gamma 3 L^2 \!+ \!\gamma L^2] ]\notag\\
=& 	(\lambda-1)\gamma+ \frac{1}{1-\lambda} \gamma^2 + 4\gamma^2[\frac{1}{1-\lambda} \gamma 3 L^2 \!+ \!\gamma L^2] \notag\\\leq &-(1-\lambda)\gamma+ \frac{(1-\lambda)\gamma}{3}+ \frac{(1-\lambda)\gamma}{3}+ \frac{(1-\lambda)\gamma}{3}=0,
	\end{align}
where $\gamma \leq \min\{ \frac{(1-\lambda)^2}{3}, \frac{1-\lambda}{6 L}, \sqrt{ \frac{1-\lambda}{12 L^2}}  \}$.

Thus, we can conclude that 
	\begin{align}\label{eqs23}\notag\\
&	\mathbb{E}	f(\bar{\x}_{K+1}) - \mathbb{E}f(\bar{\x}_{0}) 
	+\|\x_K-\ot\bar{\x}_K\|^2-\|\x_0-\ot\bar{\x}_0\|^2 \notag\\&+\gamma[ \|\y_K-\ot\bar{\y}_K\|^2-\|\y_0-\ot\bar{\y}_0\|^2 ] 
\leq  -\frac{\gamma}{2} \sum_{k=0}^K	\mathbb{E} \|\nabla f(\bar{\x}_k)\|^2\! -\frac{1}{5} \sum_{k=0}^K\mathbb{E} \|\x_{k}\! -\!\ot\!\bar{\x}_{k} \|^2.
	\end{align}

Define $\mathtt{Q}_t= 	\mathbb{E}	f(\bar{\x}_{t+1})
	+\|\x_t-\ot\bar{\x}_t\|^2+\gamma[ \|\y_t-\ot\bar{\y}_t\|^2$, then we have 
	\begin{align}\label{eqs23}\notag\\
& \frac{1}{T}\sum_{t=0}^{T-1}	\mathbb{E} \|\nabla f(\bar{\x}_t)\|^2\! + \frac{1}{T}\sum_{t=0}^{T-1}\mathbb{E} \|\x_{t}\! -\!\ot\!\bar{\x}_{t} \|^2 \leq \frac{\mathtt{Q}_{T}- \mathtt{Q}_0}{T \min\{\frac{1}{5}, \frac{\gamma}{2} \}} = \mathcal{O}({\frac{1}{T}})	.
	\end{align}	

Thus, we can conclude our result in Theorem.~\ref{Thm}.
	\begin{align*}
		\frac{1}{T} \sum_{t= 0}^{T-\!1}\Eb\big[{\Big\|\nabla f(\xb_k) \Big\|^2} + {\frac{1}{m}\sum_{i=1}^{m}\|\x_{i,k}- \xb_k\|^2} \big] =\mathcal{O}(\frac{1}{T}),
	\end{align*}
where $\gamma$ is a constant-level step-size, which satisfies \\
$\gamma \leq \min\{ \frac{1}{3L}, \sqrt{\frac{1-\lambda}{72m L^2}},\sqrt{\frac{1}{24m L^2}},  \frac{1}{5},  \frac{1}{40L^2},\frac{1-\lambda}{120L^2}, \frac{(1-\lambda)^2}{3}, \frac{1-\lambda}{6 L}, \sqrt{ \frac{1-\lambda}{12 L^2}}  \} $. 
\end{proof}

\end{document}